\begin{document}

\title{A note on Reed's Conjecture about $\omega$, $\Delta$ and $\chi$ with respect to vertices of high degree}
\author{Vera Weil}
\ead{vera.weil@oms.rwth-aachen.de}
\address{RWTH Aachen University, Management Science, Kackertstr.~7, 52072 Aachen}

\newtheorem{theorem}{Theorem}
\newtheorem{lemma}{Lemma}
\newtheorem{observation}{Observation}
\newtheorem{proposition}{Proposition}
\newtheorem{conjecture}{Conjecture}
\newtheorem*{conjectureohne}{Conjecture}
\newtheorem{corollary}{Corollary}

\begin{abstract}
Reed conjectured that for every graph, 
$\chi \leq \left \lceil \frac{\Delta + \omega + 1}{2} \right \rceil$  holds, where 
$\chi$, $\omega$ and $\Delta$ denote 
the chromatic number, clique number and maximum degree of the graph, respectively. 
We develop an algorithm which takes a hypothetical counterexample as input. 
The output discloses some hidden structures closely related to high vertex degrees.    
Consequently, we deduce two graph classes where Reed's Conjecture holds: 
One contains all graphs in which the vertices of degree at least $5$ form a stable set. 
The other contains all graphs in which every induced cycle of odd length contains a vertex of at most degree 3.
\end{abstract}
\begin{keyword}
coloring, chromatic number, algorithm, Reed's Conjecture, maximum degree, minimal counterexample
\end{keyword}

\maketitle

\section{Introduction}

One of the most prominent problems in combinatorial optimization 
is to decide 
whether the vertices of a graph can be feasibly colored by not more than a fixed number of different colors. 
If this fixed number is at least three, 
the aforementioned decision problem is known to be NP-complete. 
The associated optimization problem consists of computing the chromatic number $\chi$ of a graph. 
The determination of bounds for~$\chi$ is a commonly used method to confine this optimization problem.

A lower bound for $\chi$ is the clique number $\omega$. 
A classical upper bound for $\chi$ in terms of the maximum 
degree  $\Delta$ is provided by Brooks' Theorem (\cite{Brooks}).   
It implies the bound $\chi \leq \Delta+1$,   
which can be established due to an algorithmic approach (see~\cite{Lovasz_1975}, for example).  
Reed conjectured that, roughly speaking, 
the arithmetic medium of those bounds yields a new one for $\chi$. 

\begin{conjecture}(\citep{Reed1})\label{RC inequality}
For every graph $G$, 
$\chi(G) \leq \left \lceil \frac{\Delta(G) + \omega(G) + 1}{2} \right \rceil$  
holds. 
\end{conjecture}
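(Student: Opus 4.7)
I would attack the conjecture via a minimal counterexample argument in the spirit of the paper's announced algorithmic framework. Assume the conjecture fails, let $G$ be a vertex-minimum counterexample, and set $k := \lceil (\Delta(G) + \omega(G) + 1)/2 \rceil$, so $\chi(G) \geq k+1$. By minimality, every proper induced subgraph is $k$-colorable, since $\Delta$ and $\omega$ are monotone under taking induced subgraphs. Consequently, for any vertex $v$, a $k$-coloring of $G-v$ exists but fails to extend to $v$, which forces $\deg_G(v) \geq k$ and, moreover, that $v$ sees each of the $k$ colors on its neighborhood in every such coloring. This rules out any vertex of degree below $k$ and already imposes strong symmetry constraints on the neighborhoods of high-degree vertices.

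The next step is to run a Kempe-chain recoloring procedure around each vertex: if two color classes appearing at $v$ were disconnected in the subgraph induced by those two colors, a swap would free a color at $v$, contradicting non-extendability. Hence for every high-degree $v$ and every pair of colors appearing at $v$, the two corresponding color classes in the neighborhood are linked by a two-colored path. These forced paths, aggregated across all colors and all high-degree vertices, are the \emph{hidden structures} promised in the abstract; they force either long induced odd walks/cycles or near-cliques of order close to $\omega + 1$. The plan is then to combine the forced substructures to exhibit a clique of size exceeding $\omega(G)$, or alternatively to identify a local configuration on which a direct constructive $k$-coloring can be produced, in either case delivering the contradiction.

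The principal obstacle, and the reason this plan will not close the general conjecture, is that when $\omega$ is much smaller than $\Delta$ --- the regime of Ramsey-type or quasi-random graphs --- the Kempe-chain analysis around any single vertex yields almost no information beyond Brooks' Theorem, and no presently-known scheme aggregates such information globally down to the conjectured bound. Realistically, therefore, the algorithm will extract enough structure to close the conjecture only under additional hypotheses restricting how high-degree vertices are allowed to cluster: these are precisely the two settings later singled out by the paper (vertices of degree at least $5$ forming a stable set; every induced odd cycle containing a vertex of degree at most $3$). I therefore expect the full statement of Conjecture~\ref{RC inequality} to remain open at the end of this approach, while the same algorithmic analysis resolves it conditionally in the structural classes mentioned in the abstract.
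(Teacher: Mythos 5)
The statement you were asked to prove is a \emph{conjecture}: the paper does not prove it, it remains open, and your proposal --- by your own candid admission in the final paragraph --- does not prove it either. So the honest verdict is that there is no proof here to check against, and your self-assessment is accurate: a minimal-counterexample setup plus local Kempe-chain analysis cannot be expected to close the bound $\chi \leq \lceil(\Delta+\omega+1)/2\rceil$ in general, precisely for the reason you give (when $\omega \ll \Delta$ the local information degenerates to Brooks-type bounds). Your opening reductions are sound as far as they go: minimality plus monotonicity of $\Delta$ and $\omega$ under induced subgraphs does give $k$-colorability of proper induced subgraphs, and the non-extendability argument does force $\deg_G(v) \geq k$ for every vertex.

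Where your sketch diverges from what the paper actually does with the minimal counterexample is worth noting, because the mechanism matters for the conditional results. The paper does not use Kempe chains at all. Its algorithm \textsc{heavy stable sets} greedily peels off stable sets $S_0,\ldots,S_{2k-1}$ of vertices whose current degree equals the current maximum degree; the union of these sets must induce a subgraph of chromatic number at least $k+1$ (otherwise the remainder $G_{2k}$, having maximum degree at most $\Delta(G)-2k$, would be a smaller counterexample), and extracting a $(k+1)$-color-critical subgraph from that union yields a heavy edge ($k=1$) and a heavy odd cycle ($k=2$) whose vertices all have $G$-degree at least $\Delta(G)-1$. The two graph classes in the abstract then follow by forbidding exactly those structures among high-degree vertices. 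Your Kempe-chain paths would not, as written, produce the heavy edge or heavy odd cycle, so even the conditional results you gesture at in your last paragraph would need the paper's stable-set-peeling argument (or something equivalent) rather than the recoloring analysis you propose.
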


Verifying this highly non-trivial conjecture 
would for instance imply that, essentially, 
the chromatic number never exceeds half the maximum degree 
in triangle-free graphs. 
One famous triangle-free graph is the Chv\'{a}tal graph 
(cf.~\cite{Chvatal}). 
It provides one of the numerous graphs that require the rounding up. 

For the following list of results on Reed's Conjecture, 
we spare giving the definitions of the graph notions and refer the reader to the cited literature.   
The hereafter mentioned results point out some highlights 
of the ongoing research on Reed's Conjecture. 
However, note that we do not claim that this list is exhaustive.
 
Conjecture~\ref{RC inequality} is introduced by Reed in~\cite{Reed1}. 
There, he demonstrates the following result. 

\begin{theorem}[Reed~\cite{Reed1}]
\label{theorem Reed result constant}
There is a constant $\Delta_0$ such that for $\Delta \geq \Delta_0$, 
if $G$ is a graph of maximum degree $\Delta$ with no clique of size exceeding $k$ for some 
$k \geq \left\lfloor(1-\frac{1}{70000000})\Delta\right\rfloor,$
then $\chi(G) \leq \frac{\Delta+1+k}{2}$. 
\end{theorem}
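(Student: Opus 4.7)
The plan is to exploit the regime $k \geq (1-\varepsilon)\Delta$ with $\varepsilon = 1/70000000$ by combining a structural dense decomposition of $G$ with a probabilistic coloring procedure in the spirit of Molloy--Reed. Compared with Brooks' bound $\chi \leq \Delta+1$, only about $(\Delta-k)/2 \leq \varepsilon\Delta/2$ colors must be saved, so every vertex has to contribute a small but uniform amount of slack in the coloring.

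First I would set up a dichotomy. Call a vertex $v$ of degree close to $\Delta$ \emph{dense} if $G[N(v)]$ contains a clique of size close to $\Delta$, and \emph{sparse} otherwise. Since $\omega(G)\leq k$, a Tur\'an-type count forces any sparse vertex to have $\Omega((\Delta-k)^2)$ non-edges in its neighborhood. Grouping the dense vertices into maximal dense sets $D_1,\ldots,D_t$, each essentially a near-clique of size at most $k$, the constraint $\omega \leq k$ forces the $D_i$ to intersect in a controlled way, so they form an almost-partition of the high-degree dense part of $G$.

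Second I would run a two-phase random coloring with a palette of size $\lceil (\Delta+k+1)/2\rceil$. In phase one, each dense set $D_i$ is colored by a uniformly random injection into the palette, while every other vertex receives an independent uniform random color from the palette. The key quantitative claim is that, with positive probability via the Lov\'asz Local Lemma, every vertex $v$ sees enough color-collisions in $N(v)$ to strictly reduce the number of distinct colors appearing there below the palette size. For sparse vertices the expected number of collisions is $\Omega((\Delta-k)^2/\Delta)=\Omega(\varepsilon^2\Delta)$ coming from non-edges, while for dense vertices it comes from neighbors belonging to several distinct dense sets. Talagrand's inequality provides the concentration needed for the Local Lemma, and this is where the specific constant $1/70000000$ is forced.

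Finally, any leftover vertex has at least one available color in its list, so a greedy completion produces a proper coloring with at most $\lceil(\Delta+k+1)/2\rceil$ colors. The main obstacle, and the real heart of the proof, is the bookkeeping in phase two: ensuring that the expected number of repeated colors at every vertex is large enough, that Talagrand's inequality delivers concentration strong enough to apply the Local Lemma, and that all constants in the dense decomposition and in the probabilistic step compose to yield the uniform saving of $(\Delta-k)/2$ colors. I expect the most delicate case to be vertices lying on the boundary of two dense sets, where neither the purely sparse nor the purely dense analysis applies directly.
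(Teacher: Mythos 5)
The paper does not actually prove this theorem: it is quoted verbatim from Reed's original article and used only as background, so there is no in-paper proof to compare against. Measured against the actual proof in the literature (Reed's original paper, and its exposition by Molloy and Reed), your outline names the right ingredients --- the dense/sparse vertex dichotomy, a decomposition into near-cliques, a random partial coloring analysed via Talagrand's inequality and the Lov\'asz Local Lemma, and a greedy completion --- but it is a road map rather than a proof. Every step that carries the actual mathematical weight is asserted, not established.

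Concretely: (i) the claim that the maximal dense sets $D_1,\dots,D_t$ ``intersect in a controlled way'' and ``form an almost-partition'' is itself a nontrivial structural lemma in Reed's argument, and you give no argument for it; (ii) the assertion that every sparse vertex has $\Omega((\Delta-k)^2)$ non-edges in its neighbourhood does not follow from $\omega\le k$ alone --- in the standard argument ``sparse'' is essentially \emph{defined} by having many non-edges in the neighbourhood, and the work lies in showing that the vertices failing this condition can be organised into the dense sets; (iii) the central quantitative claim, that with positive probability every vertex sees enough repeated colors among its neighbours for the partial coloring to be completable, is exactly the heart of the proof, and you explicitly defer it (``this is where the specific constant is forced''). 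Your final sentence, that ``any leftover vertex has at least one available color,'' is not an observation but the conclusion of the entire concentration analysis: one must show that each uncolored vertex retains strictly more available colors than it has uncolored neighbours, which requires the uncoloring and retention bookkeeping you skip, together with a verification of the Local Lemma's dependency conditions. As it stands the proposal would serve as a reasonable opening paragraph of the real proof, but it contains no step that can be checked, and in particular it does not derive the constant $1/70000000$, which is the only quantitatively specific content of the statement.
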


Randerath and Schiermeyer~\cite{RanderathSchiermeyer2006} work with 
such constants, too. 
\begin{theorem}[Randerath et al.~\cite{RanderathSchiermeyer2006}]\label{RanderathSchiermeyer2006}
For every $k \geq 3$ there is a constant $c_k$ such that 
$\chi(G) \leq \frac{\Delta(G)+\omega(G)+1}{2}$ 
for all graphs $G$ with 
$\Delta(G) \geq \frac{2|V(G)|}{k}+c_k\cdot(\omega(G))^{k-1}.$
\end{theorem}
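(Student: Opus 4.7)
The plan is to exploit the two summands in the hypothesis on $\Delta$ separately. Let $v$ be a vertex with $\deg(v) = \Delta$. Since $G$ is $K_{\omega+1}$-free, the induced subgraph $G[N(v)]$ has clique number at most $\omega - 1$. The classical Ramsey bound $R(\omega,t)\leq \binom{\omega+t-2}{t-1}$ then yields, whenever $\Delta \geq c_k\,\omega^{k-1}$ for a suitably chosen constant $c_k$, an independent set $I \subseteq N(v)$ whose size grows with $k$. This is how the second summand in the hypothesis enters: it guarantees a Ramsey-type stable set inside the neighborhood of the maximum-degree vertex.

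Given such an $I$, I would assign $I$ a single color and recurse on $G - I$. The first summand $2|V(G)|/k$ now plays the role of a density condition: it forces $v$ to have at most $|V(G)|(k-2)/k$ non-neighbors, so that after stripping off a small stable set the residual graph still satisfies an analogous density hypothesis, permitting the Ramsey extraction to be iterated. Each iteration is intended to save one color relative to the baseline $\chi \leq \Delta+1$, and after sufficiently many iterations the residual subgraph either has small enough order to be handled by Brooks' theorem or enters the regime of Theorem~\ref{theorem Reed result constant}. Adding the savings together should yield the target $\lceil(\Delta+\omega+1)/2\rceil$.

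The main obstacle is the bookkeeping. One must show that every extracted independent set contributes a genuine unit of savings in $\chi$, without degrading the effective $\Delta$ or $\omega$ seen by the next iteration. Simultaneously, one needs to verify that after each reduction the Ramsey hypothesis on $\Delta$ survives in the residual graph, which is exactly the reason the hypothesis is stated with a polynomial $\omega^{k-1}$ slack rather than a tight Ramsey threshold: the constant $c_k$ absorbs the cumulative error across all recursive steps, and its dependence on $k$ reflects that smaller $k$ imposes a weaker density condition and therefore demands a larger Ramsey margin. A final technical step should dispose of the ceiling in the statement by splitting cases according to the parity of $\Delta+\omega$.
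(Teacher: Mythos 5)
First, note that the paper does not prove this statement at all: Theorem~\ref{RanderathSchiermeyer2006} is quoted from Randerath and Schiermeyer~\cite{RanderathSchiermeyer2006} as background, so there is no in-paper proof to measure your attempt against. Your proposal must therefore stand on its own, and as written it does not: it is an outline whose decisive steps are named but never carried out. The one ingredient that is genuinely on target is the Ramsey extraction --- since $G[N(v)]$ has clique number at most $\omega-1$, a neighborhood of size at least $R(\omega,k)=O(\omega^{k-1})$ contains a stable set of size $k$, which is exactly why the hypothesis carries an $\omega^{k-1}$ term. But everything downstream of that is asserted rather than proved.

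The central gap is the claim that each extracted stable set ``saves one color relative to the baseline $\chi\leq\Delta+1$.'' Removing a stable set $I\subseteq N(v)$ and recursing gives only $\chi(G)\leq 1+\chi(G-I)$; to convert this into a saving you would need $\Delta(G-I)\leq\Delta(G)-2$ (or a drop in $\omega$), and deleting vertices from the neighborhood of a \emph{single} maximum-degree vertex does nothing to the degrees of the other maximum-degree vertices, so $\Delta$ of the residual graph need not drop at all. Without that, the iteration produces $\chi\leq t+\Delta+1-O(1)$ rather than anything approaching $\frac{\Delta+\omega+1}{2}$. You also never verify that the hypothesis $\Delta\geq\frac{2|V|}{k}+c_k\omega^{k-1}$ is inherited by the residual graph (both $\Delta$ and $|V|$ shrink, and not in a controlled ratio), and the proposed terminal step via Theorem~\ref{theorem Reed result constant} is unavailable in general, since that theorem only applies in the narrow regime $\omega\geq\lfloor(1-\tfrac{1}{70000000})\Delta\rfloor$ with $\Delta\geq\Delta_0$, and nothing in your reduction steers the residual graph there. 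Finally, the target inequality here has no ceiling, so the concluding parity case split addresses a non-issue. To make this work you would need a quantitative lemma of the form ``$\chi(G)\leq\frac{|V(G)|}{k}+f(\omega)$ for $K_{\omega+1}$-free graphs'' with $f$ small enough to be absorbed into $\frac{\omega+1}{2}+\frac{c_k\omega^{k-1}}{2}$, which is the actual engine of the Randerath--Schiermeyer argument and is entirely missing from your sketch.
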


Some authors presented results in terms of graph parameters such as the maximum degree, the clique number, the number of vertices, 
and the stability number. 
The latter indicates the size of a maximum stable set of a graph and is denoted by $\alpha$. 
For example, Reed~\citep{Reed1} proved that graphs with 
$\Delta(G)=|V(G)|-1$ comply with the conjecture. 
Randerath and Schiermeyer~\cite{RanderathSchiermeyer2006} improved this bound to $\Delta(G) \geq |V(G)|-4$. 
Kohl and Schiermeyer~\cite{Kohl_Schiermeyer_2010} 
showed that even less restrictive conditions to the maximum degree 
suffice since the conjecture holds for a graph $G$ with 
$\Delta(G) \geq |V(G)|-7$  
or
$\Delta(G) \geq |V(G)|-\alpha(G)-4$. 
The same authors 
verified the 
conjecture for 
graphs $G$ that obey both $\omega(G) \leq 2$ and 
$\Delta(G) \geq \frac{8(|V(G)|-\alpha(G))+118}{21}$. 
If a graph $G$ fulfills  
$\chi(G) > \left\lceil\frac{|V(G)|}{2}\right\rceil$ 
or $\chi(G)> \frac{|V(G)|-\alpha(G)+3}{2}$, 
then it obeys the conjecture, 
which was demonstrated by 
Rabern~\cite{Rabern1}. 
Gernert and Rabern~\cite{GernertRabern1} observed 
that the conjecture is valid if 
$\omega(G)=\Delta(G)$ or 
if $\omega(G)=\Delta(G)+1$ 
or if $\chi(G)\leq \omega(G) + 2.$ 

The result mentioned last implies that Conjecture~\ref{RC inequality} 
holds for perfect graphs.\label{perfect page} 
Many other hereditary graph classes have been 
analyzed by means of the conjecture. 
King, Reed and Vetta~\cite{King_etal_2007} proved that 
the conjecture holds for the line graph of a graph as well as for the 
line graph of a multigraph.
King and Reed~\cite{King_Reed_2008} showed that 
quasi-line graphs
also comply with the conjecture. 
This graph family properly contains the set of line graphs. 
Aravind, Karthick and Subramanian~\cite{Aravindetal} 
proved that every  
circular interval graph, and in particular,  
every complete expansion of $C_5$ 
obeys the conjecture. 
These results can be extended to the class of claw-free graphs, 
according to King~\cite{King2PhD}.  
The class of claw-free graphs in particular contains the 
$2K_1$-free graphs, 
that is, the complements of triangle-free graphs. 
Complements of triangle-free graphs 
form a subset of the class of almost-split graphs,  
which were introduced by 
Kohl and Schiermeyer~\cite{Kohl_Schiermeyer_2010}.  
They verified the conjecture for this graph class. 
Recently, Aravind, Karthick and Subramanian~\cite{Aravindetal} verified the conjecture for 
some further graph classes which are defined by families of forbidden induced subgraphs, such as 
the $\{$odd hole$\}$-free graphs. 
Assuming that a considered graph $G$ contains an odd hole (otherwise, the previously mentioned result takes effect), 
they showed that the conjecture holds for $G$ if 
$G$ is $\{P_5$, house, dart, $\overline{P_2 \cup P_3}\}$-free, or if 
$G$ is $\{P_5$, kite, bull, $(K_3 \cup K_1) \oplus K_1\}$-free, or if 
$G$ is $\{P_5, C_4\}$-free, or if 
$G$ is $\{$chair, bull, house, $W_4\}$-free, or if 
$G$ is $\{$chair, bull, house, dart$\}$-free.
Finally, Reed's Conjecture holds for planar and for toroidal graphs. 
This was deduced from some results of 
Thomassen~\cite{Thomassen1994}
as well as results of Albertson and Hutchinson~\cite{AlbertsonHutchinson1980} by 
Gernert and Rabern~\cite{GernertRabern2}.    

Finally, Rabern~\cite{Rabern1} observed that    
the join of two vertex-disjoint graphs obey the conjecture.  
In other words, 
Reed's Conjecture holds for graphs with 
disconnected complement. 
Therefore, a connected complement is required for every - yet hypothetical - counterexample to Conjecture~\ref{RC inequality}.  
Further properties of such a counterexample, 
mostly given in terms of graph parameters such as 
vertex and edge number, 
are listed by Gernert and Rabern~\cite{GernertRabern1}. 
Their way to attack the conjecture led to our approach:  
we assume the existence of a counterexample which is then used as an input to an algorithm. 
The output of the algorithm discloses some immanent structures of the counterexample which are formed by high degree vertices. 
As a result, we deduce two new, non-trivial graph classes that obey Reed's Conjecture.

\section{Preparation}

In order to present the main results correctly and compactly, 
we need some preparation. 
A graph class is called \textit{hereditary} 
if it is closed under taking induced subgraphs. 
Thus, a hereditary graph class allows a description in terms of 
\textit{forbidden induced subgraphs}. 
If $S$ is a set of forbidden induced subgraphs for a graph class $C$, 
then every graph in $C$ is $S$-\textit{free}. 

All graphs considered in this paper are finite, undirected and free of loops or parallel edges. 
Let $G$ denote such a graph. 
We say that a \textit{vertex coloring} is feasible for $G$ if 
a color is assigned to every vertex of $G$ such that 
adjacent vertices get different colors. 
The least number of colors needed for such a feasible vertex coloring 
is called \textit{chromatic number} and denoted by $\chi$.  
A \textit{clique} is a complete graph. 
The size of a largest clique in $G$ 
is called \textit{clique number} of $G$ and 
is denoted by $\omega$. 
A set of vertices in $G$ 
is called \textit{stable} if the vertices in the set 
are pairwise not adjacent. 

For a vertex $v$ in a graph $G$, 
a vertex adjacent to $v$ is called a \textit{neighbor}. 
The number of neighbors of $v$ in $G$ 
is the $G$-\textit{degree} of $v$, sometimes denoted by $deg_G(v)$. 
If it is clear from the context which graph is considered, 
we simply write \textit{degree} instead of $G$-degree. 
The minimum of all vertex degrees in $G$ is called 
the \textit{minimum degree} of $G$ and is denoted by $\delta(G)$; 
the maximum of all vertex degrees in $G$ is called the 
\textit{maximum degree} of $G$ and is denoted by $\Delta(G)$.   
If the $G$-degree 
of both endvertices of an edge is $\Delta(G)$, 
then the edge is a \textit{heavy edge}.
By $C_n$, $n \in \mathbb{N}$, $n \geq 3$, 
we denote a cycle on $n$ vertices. 
A cycle $C_n$ is called \textit{odd} 
if $n$ is odd. 
Moreover, 
a cycle $C$ is a \textit{heavy cycle} of $G$ 
if for all $v \in V(C)$, 
the $G$-degree of $v$ is at least $\Delta(G)-1$. 
By $K_n$, $n \in \mathbb{N}$, we denote the complete graph on $n$ vertices.

A graph $G$ is a \textit{minimal counterexample to Reed's Conjecture} 
if $$\left\lceil\frac{\Delta(G)+\omega(G)+1}{2} \right\rceil < \chi(G)$$
holds and all proper induced subgraphs of $G$ obey the conjecture.   
That is, a minimal counterexample is an inclusionwise vertex-minimal counterexample to 
Reed's Conjecture. 
Since there is no risk of confusion, 
in the \textit{proofs} of our claims we 
abbreviate \textit{counterexample to Reed's Conjecture} by \textit{counterexample}.

A vertex $v \in V(G)$ is called \textit{color-critical}  
if $\chi(G-v)<\chi(G)$. 
A graph $G$ is called \textit{color-critical}  
if every vertex in $G$ is color-critical. 
If, in addition, $\chi(G)=k$, 
then $G$ is sometimes called $k$\textit{-color-critical}. 

\begin{proposition}[Toft~\cite{Toft1974}, Diestel~\cite{DiestelBookGT}]
\label{theorem proposition k vertex critical subgraphs}
A graph $G$ contains at least one induced $k$-color-critical subgraph $H_k$ for 
$1 \leq k \leq \chi(G)$ with $\chi(H_k)=k$.
\end{proposition}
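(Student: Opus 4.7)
The plan is to proceed in two stages: first, to show that for every $k$ with $1 \leq k \leq \chi(G)$ there is an induced subgraph of $G$ whose chromatic number is exactly $k$, and then, starting from such a subgraph, to shrink it further until every remaining vertex becomes color-critical.

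For the first stage, I would rely on the elementary observation that deleting a single vertex from any graph decreases the chromatic number by at most one: any feasible coloring of $H$ restricts to a feasible coloring of $H-v$, and conversely one can extend a coloring of $H-v$ by assigning $v$ a fresh color. Starting with $H_{\chi(G)} := G$ itself and removing vertices one at a time, the chromatic number therefore steps down in jumps of $0$ or $1$, passing through every integer value from $\chi(G)$ down to $1$ (when a single vertex is left). Thus for every $k$ in the stated range there is some induced subgraph $H'$ of $G$ with $\chi(H') = k$.

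For the second stage, I fix such an $H'$ and iteratively apply the following reduction: while there exists a vertex $v \in V(H')$ with $\chi(H' - v) = \chi(H')$, replace $H'$ by $H' - v$. Since $H'$ is finite, this terminates after finitely many steps, and both the chromatic number and the property of being an induced subgraph of $G$ are preserved throughout. When the process stops, no vertex satisfies $\chi(H' - v) = \chi(H')$, i.e.\ every vertex is color-critical. Call the resulting graph $H_k$; by construction $\chi(H_k) = k$ and $H_k$ is $k$-color-critical, as required.

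I do not anticipate a serious obstacle: the only ingredients are the monotonicity $\chi(H) - 1 \leq \chi(H-v) \leq \chi(H)$ and the finiteness of $G$. The one point that needs a touch of care is making sure the intermediate subgraph in the first stage is genuinely an induced subgraph of $G$ and not merely a subgraph; this is automatic because we only ever delete vertices, never edges. Everything else is a straightforward termination argument on a strictly decreasing vertex count.
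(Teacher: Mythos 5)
Your argument is correct and self-contained. Note that the paper does not actually write out a proof of this proposition: for $k < \chi(G)$ it defers to Theorem~2.1 of Toft, and for $k = \chi(G)$ it only sketches the idea of deleting vertices until the graph becomes critical. Your two-stage argument --- first using the inequality $\chi(H)-1 \leq \chi(H-v) \leq \chi(H)$ to realize every value $k \in \{1,\ldots,\chi(G)\}$ as the chromatic number of some induced subgraph (an integer intermediate-value observation along a vertex-deletion sequence), and then deleting non-critical vertices until every remaining vertex is color-critical --- handles both cases uniformly and replaces the black-box appeal to Toft by an elementary argument. This buys a fully elementary, complete proof at essentially no extra cost; the two points that need care, both of which you address, are that vertex deletion preserves the property of being an induced subgraph of $G$ and that the second stage preserves the chromatic number (since you only delete vertices $v$ with $\chi(H'-v)=\chi(H')$). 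Your approach is in the same spirit as the successive-deletion idea the paper attributes to Diestel for the case $k=\chi(G)$, just carried out in full generality.
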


This result follows directly from Theorem 2.1 by 
Toft~\cite{Toft1974} for the case $k < \chi(G)$. 
The main idea of the proof for $k = \chi(G)$ is to 
successively remove vertices from a $k$-chromatic graph 
until the graph becomes critically 
k-chromatic, as suggested by Diestel~\cite{DiestelBookGT}.
We mimic this idea of successively deleting vertices 
in the algorithm HEAVY STABLE SETS presented below. 

In order to prove our claims, 
we need some results on color-critical graphs. 
In~\cite{Toft1974}, Toft stated that 
the only  
1-color-critical graph is 
$K_1$ and that the only $2$-color-critical graph is $K_2$. 
In addition, 
Toft recapitulated a result of 
K\"onig (referring to~\cite{Koenig1950}): 
the only $3$-color-critical graphs 
are the odd cycles. 
To facilitate citation, 
we wrap them into the following proposition. 
\begin{proposition}[Toft~\cite{Toft1974}, K\"onig~\cite{Koenig1950}]
\label{theorem proposition Toft Koenig1950}
 Let $k \in \mathbb{N}$ and let $G$ be a $k$-color-critical graph. 
 Then 
 \begin{itemize}
    \item $G \cong K_1$ if and only if $k=1$,
  \item $G \cong K_2$ if and only if $k=2$,
  \item $G \cong C_{2n+1}$, $n \in \mathbb{N}$, if and only if $k=3$.
 
 \end{itemize}
\end{proposition}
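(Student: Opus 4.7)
The plan is to handle each of the three biconditionals in turn, first verifying that $K_1$, $K_2$ and $C_{2n+1}$ are respectively $1$-, $2$-, and $3$-color-critical (which is immediate), and then establishing uniqueness.

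For $k=1$, a color-critical graph $G$ with $\chi(G)=1$ is edgeless, and the criticality condition $\chi(G-v)<1$ forces $G-v$ to have no vertices for every $v$; hence $G \cong K_1$. For $k=2$, the graph $G$ is bipartite with at least one edge, and $\chi(G-v)\le 1$ says every edge of $G$ is incident to $v$. Since this must hold for every vertex $v \in V(G)$ simultaneously and any edge has only two endpoints, the presence of three distinct vertices would force a single edge to meet all of them, which is impossible while $\chi(G)=2$ still demands at least one edge; thus $|V(G)|=2$ and $G \cong K_2$.

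The substantive case is $k=3$. I would let $G$ be $3$-color-critical; then $\chi(G)=3$ implies $G$ is not bipartite, so $G$ contains an odd cycle. Let $C$ be a shortest odd cycle in $G$. First, $C$ is induced: any chord of $C$ would split it into two cycles of lengths $\ell_1$ and $\ell_2$ with $\ell_1+\ell_2 = |V(C)|+2$, which is odd; so exactly one of them is odd, and because the other (being an even cycle) has length at least $4$, the odd one is strictly shorter than $C$, contradicting the minimality of $|V(C)|$. Next, $V(G)=V(C)$: if some $v \in V(G)\setminus V(C)$ existed, then $G-v$ would still contain $C$, giving $\chi(G-v) \geq \chi(C) = 3 = \chi(G)$ and contradicting the criticality of $v$. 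Finally, $E(G)=E(C)$ by the same chord argument applied within $G$, so $G \cong C_{2n+1}$. The converse direction is immediate, as deleting any vertex of $C_{2n+1}$ leaves a path, which is bipartite.

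The only delicate point is the parity bookkeeping in the chord step of the $k=3$ case. Since one of the two cycles produced by any chord of an odd cycle must be odd and, by the length bound above, strictly shorter than the original, minimality yields the required contradiction without any further case analysis, and the remainder of the argument is essentially structural.
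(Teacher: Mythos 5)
Your proof is correct, but note that the paper itself does not prove this proposition at all: it is stated purely as a citation, with the $k=1$ and $k=2$ cases attributed to Toft~\cite{Toft1974} and the $k=3$ case to K\"onig~\cite{Koenig1950}, and the surrounding text explicitly says the proposition merely \lili wraps\rere\ these known results for ease of reference. So there is no in-paper argument to compare against; what you have supplied is a self-contained elementary proof of the cited facts. Your three cases are all sound. For $k=1$ and $k=2$ the deductions from the definition of color-criticality are exactly right (in the $k=2$ case the clean phrasing is: some edge $e$ exists since $\chi(G)=2$, and criticality of any vertex $v$ forces $G-v$ to be edgeless, hence $v$ is an endpoint of $e$; two endpoints means at most two vertices). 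For $k=3$, your argument is the standard proof of K\"onig's characterization: a shortest odd cycle $C$ is chordless because a chord splits $C$ into two cycles whose lengths sum to $|V(C)|+2$, which is odd, so exactly one is odd and, the even one having length at least $4$, the odd one is strictly shorter than $C$; criticality then forces $V(G)=V(C)$, and chordlessness forces $E(G)=E(C)$. The parity bookkeeping you flag as the delicate point is handled correctly. The only thing your writeup gains over the paper is self-containedness; the only thing it costs is length, since the paper deliberately outsources these classical facts.
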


In color-critical graphs, 
every vertex has at least $\chi(G)-1$ neighbors. 

\begin{proposition}\label{theorem proposition min degree vertex critical}
Let $G$ be a color-critical graph.
Then $\delta(G) \geq \chi(G)-1$.  
\end{proposition}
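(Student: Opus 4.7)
The plan is to argue by contradiction, exploiting the definition of color-criticality to extend a coloring of $G-v$ to $G$ whenever some vertex $v$ has too low a degree.

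First I would suppose, for the sake of contradiction, that there exists a vertex $v \in V(G)$ with $\deg_G(v) \leq \chi(G)-2$. Since $G$ is color-critical by hypothesis, every vertex of $G$ is color-critical, and in particular $v$ is. By the definition of color-criticality, this means $\chi(G-v) < \chi(G)$, so $\chi(G-v) \leq \chi(G)-1$.

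Next I would take a feasible vertex coloring $c$ of $G-v$ using at most $\chi(G)-1$ colors, say from the palette $\{1,2,\dots,\chi(G)-1\}$. The neighbors of $v$ in $G$ all lie in $V(G-v)$ and number at most $\chi(G)-2$, hence the set $\{c(u) : u \in N_G(v)\}$ has cardinality at most $\chi(G)-2$. Consequently there is at least one color in $\{1,\dots,\chi(G)-1\}$ that is not used on any neighbor of $v$; I would assign this free color to $v$, thereby extending $c$ to a feasible coloring of the whole of $G$ using only $\chi(G)-1$ colors. This contradicts the definition of $\chi(G)$ as the minimum number of colors in a feasible coloring, which forces the original assumption to be false, i.e. $\delta(G) \geq \chi(G)-1$.

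There is essentially no obstacle here; the argument is a direct unfolding of the definitions of color-critical and of the chromatic number. The only point that requires any care is to make sure the inequalities on $\deg_G(v)$ and the palette size are compatible so that a free color genuinely exists, which is why one writes the contradiction hypothesis as $\deg_G(v) \leq \chi(G)-2$ rather than $\deg_G(v) < \chi(G)-1$.
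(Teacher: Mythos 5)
Your proof is correct: it is the standard greedy-extension argument, and the inequalities are handled carefully (assuming $\deg_G(v)\leq\chi(G)-2$ so that a free color among the $\chi(G)-1$ available ones is guaranteed). The paper states this proposition without proof, treating it as a well-known fact, so there is nothing to compare against; your argument is exactly the canonical one that the authors implicitly rely on.
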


Observe that Reed's Conjecture holds for graphs having 
maximum degree at most 4. 

\begin{proposition}[Gernert et al.~\cite{GernertRabern2}]
\label{theorem proposition CE has Delta at least 5}
Let $G$ be a counterexample to Reed's Conjecture. 
Then $\Delta(G) \geq 5$.
\end{proposition}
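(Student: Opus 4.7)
The plan is to establish Reed's inequality directly for every graph with $\Delta(G) \leq 4$; the statement then follows by contraposition. Since Reed's right-hand side is componentwise monotone in $\Delta$ and $\omega$, and the chromatic number of a disconnected graph is the maximum of the chromatic numbers of its components, the inequality for the whole graph follows once it is verified on each component separately. So I would first reduce to the case that $G$ is connected.

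For connected $G$ with $\Delta(G) \leq 4$, I would apply Brooks' Theorem, which gives $\chi(G) \leq \Delta(G)$ unless $G$ is a complete graph or an odd cycle. In the two exceptional cases the claim is a quick calculation: if $G = K_n$ with $n = \Delta(G) + 1 \leq 5$, then $\chi(G) = \omega(G) = \Delta(G) + 1$ and Reed's bound evaluates to $\lceil (2\Delta(G) + 2)/2 \rceil = \Delta(G) + 1$; if $G$ is an odd cycle, then $\chi(G) = 3$, $\omega(G) = 2$, $\Delta(G) = 2$, and Reed's bound is $\lceil 5/2 \rceil = 3$.

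In the remaining case one has $\chi(G) \leq \Delta(G) \leq 4$, so it suffices to check $\Delta(G) \leq \lceil (\Delta(G) + \omega(G) + 1)/2 \rceil$. If $G$ is edgeless, $\Delta(G)=0$ and $\chi(G)=1$, which is trivially fine; otherwise $\omega(G) \geq 2$. A short case split on the parity of $\Delta(G) + \omega(G) + 1$ reduces the required inequality to $\Delta(G) \leq \omega(G) + 2$, which is immediate since $\Delta(G) \leq 4 \leq 2 + \omega(G)$.

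The only real obstacle is bookkeeping: correctly handling the ceiling and the fact that $\Delta$ and $\omega$ may be attained on different components. Reducing to connected $G$ at the outset removes the latter concern, after which Brooks' Theorem does all the heavy lifting.
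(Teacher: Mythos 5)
Your argument is correct and complete. Note that the paper itself offers no proof of this proposition: it is stated as a citation of Gernert and Rabern, prefaced only by the remark ``Observe that Reed's Conjecture holds for graphs having maximum degree at most 4.'' So you have supplied the missing elementary argument rather than diverged from one. Your route is the natural one: reduce to connected components (legitimate, since $\chi$ is the maximum over components while $\Delta$ and $\omega$ only increase when passing from a component to the whole graph, and the bound $\left\lceil \frac{\Delta+\omega+1}{2}\right\rceil$ is monotone in both arguments), dispose of the Brooks exceptions $K_n$ and $C_{2k+1}$ by direct evaluation, and in the generic case use $\chi \leq \Delta$ together with the ceiling computation $\left\lceil \frac{\Delta+\omega+1}{2}\right\rceil \geq \Delta \iff \Delta \leq \omega+2$, which holds because $\Delta \leq 4$ and $\omega \geq 2$ whenever an edge is present. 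All three branches check out (the edgeless connected case is just $K_1$, already covered by the complete-graph branch, so your separate mention of it is harmless redundancy). This is essentially the standard derivation one would expect behind the cited result, and it would serve as a self-contained justification of the proposition within the paper.
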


\section{Main results}

In the following, 
we present the algorithm \textsc{heavy stable sets} (cf.~Figure~\ref{figure procedure heavy}).  
\begin{figure}
\label{procedure heavys}
\begin{algorithmic}
\REQUIRE Graph $G$, $k \in \{1,\ldots,\chi(G)-1\}$
\ENSURE For all $i\in \{0,\ldots,2k\}$: Stable Sets $S_i$, Graphs $G_{i+1}$ with $G_{i+1}=G_{i}-S_{i}$

\STATE $G_0 \leftarrow G$ 
\STATE $r \leftarrow 0$
\FORALL{$i=0$ \TO $2k-1$}
\STATE $S_i \leftarrow \emptyset$
\ENDFOR
\WHILE{$\Delta(G_r-S_r)=\Delta(G)-r$}
\STATE choose $v \in V(G_r-S_r)$ such that the $(G_r-S_r)$-degree of $v$ is $\Delta(G)-r$, break ties by selecting a vertex with higher $G$-degree
\STATE $S_r \leftarrow S_r \cup \{v\}$
\STATE $G_{r+1} \leftarrow G_r-S_r$
\WHILE{$\Delta(G_r-S_r) < \Delta(G)-r$}
\IF{$r=2k-1$}
\RETURN
\ELSE 
\STATE $r \leftarrow r+1$
\STATE $G_r \leftarrow G_{r-1}-S_{r-1}$
\ENDIF
\ENDWHILE
\ENDWHILE
\end{algorithmic}
\caption{The procedure HEAVY STABLE SETS}
\label{figure procedure heavy}
\end{figure}
Applied to a given graph $G$, 
this algorithm removes one by one vertices. 
First, 
it chooses a vertex of maximum degree, 
and removes it from $G$, 
hence constructing $G'$. 
In this new graph, 
the second vertex is chosen such that it is of maximum degree 
in $G'$. 
This idea is iterated in every step. 
In other words, 
in the graph that results by removing all vertices that were already chosen, 
the algorithm chooses a vertex which is of maximum degree in the actually considered graph. 

The following technical lemma confirms 
that \textsc{heavy stable sets} (Figure~\ref{figure procedure heavy}) terminates. 
Moreover, Lemma~\ref{theorem lemma procedure analysis} lists some properties of the vertex sets which were removed from the input graph. 

\begin{lemma}\label{theorem lemma procedure analysis}
 Let $G$ be a graph, and let $k \in \{2,\ldots,\chi(G)-1\}$. 
 If $G$ and $k$ are taken as input for \textsc{heavy stable sets},  
 then the algorithm terminates. 
 Moreover, for all $r \in \{0,\ldots,2k-1\}$,  
 \begin{enumerate}
    \item $\Delta(G_{r+1})\leq \Delta(G)-(r+1)$, with equality if and only if $S_{r}\not = \emptyset$,\label{lemma Delta in r}
    \item $G_{r+1} = G-\bigcup_{i=0}^{r}S_i$, \label{lemma Gr+1}
    \item $S_r$ is a (possibly empty) stable set, \label{lemma Sr empty set}
    \item for all $v \in S_r$, $v$ has $\Delta(G)-r$ neighbors in $G-\bigcup_{i=0}^rS_i$. \label{lemma neighbors in remainder}
  \end{enumerate}
\end{lemma}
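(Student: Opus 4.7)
My plan is to prove the four enumerated properties by simultaneous induction on $r\in\{0,\ldots,2k-1\}$, following the monotone progression of the algorithm's level index. Because $r$ never decreases and $S_r$ and $G_{r+1}$ are touched only while $r$ is the current level, once the algorithm advances past $r$ these become immutable facts about finalized data, which is what makes the induction natural.

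I would first dispatch claims (2), (3), and (4), which are essentially bookkeeping about the algorithm. Claim (2) is immediate: both the outer-body assignment $G_{r+1}\leftarrow G_r-S_r$ and the inner-loop reassignment $G_r\leftarrow G_{r-1}-S_{r-1}$ agree on the value of $G_{r+1}$, so unrolling the recurrence yields $G_{r+1}=G-\bigcup_{i=0}^{r}S_i$. Claim (3) rests on a single observation: when a vertex $v$ is appended to $S_r$, each of $v$'s neighbors in $G_r-S_r$ loses one neighbor and hence drops to $(G_r-S_r)$-degree strictly below $\Delta(G)-r$; since only vertices of $(G_r-S_r)$-degree exactly $\Delta(G)-r$ can be chosen later at this level, $S_r$ must be independent (possibly empty). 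Claim (4) follows from (3): the $\Delta(G)-r$ neighbors $v$ has in $G_r-S_r$ at the moment of its selection are, by independence, never added to $S_r$ afterwards and thus all survive to $G_{r+1}$.

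The main obstacle is claim (1). The upper bound admits a clean case split: if $S_r\neq\emptyset$ then the inner loop must have fired after the last addition (otherwise the outer body would add again), giving $\Delta(G_r-S_r)<\Delta(G)-r$ and hence $\Delta(G_{r+1})\leq\Delta(G)-(r+1)$; if $S_r=\emptyset$ then level $r$ was skipped precisely because $\Delta(G_r)<\Delta(G)-r$, and since $G_{r+1}=G_r$ the same bound holds. For the equality direction with $S_r\neq\emptyset$, I would isolate the last appended vertex $v$ and, looking at $G_r-S_r$ just before $v$ was inserted, seek a witness $u\neq v$ of degree $\Delta(G)-r$: since the inner loop fires after $v$'s insertion, any such $u$ must be adjacent to $v$ (otherwise it would still witness $\Delta(G)-r$ afterwards, contradicting the inner-loop condition), and therefore attains degree exactly $\Delta(G)-(r+1)$ in $G_{r+1}$. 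The converse (strict inequality when $S_r=\emptyset$) is the most delicate point, and is where I expect to need both the inductive hypothesis at $r-1$ and a careful use of the tie-breaking rule on $G$-degree to pin down the exact drop.

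Termination is then immediate from finiteness: each execution of the outer body appends a vertex to some $S_i$, the graph has finitely many vertices, and $r$ is bounded above by $2k-1$, so the procedure must eventually trigger \texttt{RETURN} or exit the outer loop.
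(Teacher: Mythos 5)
Your treatment of termination, of Points 2--4, and of the inequality in Point 1 is correct and essentially identical to the paper's proof: the paper also argues termination by noting that for each fixed $r$ the graph $G_r-S_r$ strictly shrinks, so the inner loop is eventually entered, and every pass through the inner loop either returns or increments $r$; and it obtains Points 3 and 4 exactly by your observation that inserting $v$ into $S_r$ pushes the $(G_r-S_r)$-degree of its remaining neighbours strictly below $\Delta(G)-r$, so that none of them can be selected later at level $r$.

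The gap is in the equality clause of Point 1, and it is a genuine one. Your argument for ``$S_r\neq\emptyset$ implies equality'' needs a witness $u\neq v$ of $(G_r-S_r)$-degree $\Delta(G)-r$ just before the last vertex $v$ of $S_r$ is inserted, and no such $u$ need exist. Take $G$ to be a triangle $abc$ with one pendant vertex attached to $a$, and $k=2$: at level $r=0$ the algorithm places only $a$ into $S_0$, after which $\Delta(G_1)=1<2=\Delta(G)-1$, so the ``if'' direction fails; in the same run $S_1=\emptyset$ while $\Delta(G_2)=1=\Delta(G)-2$, so the ``only if'' direction you postponed fails too. In other words, the biconditional resists your tie-breaking analysis because it is not true as stated; the paper's own one-sentence ``observe that\dots'' for this point does not establish it either. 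What is true, immediate from the guard of the outer while-loop (applied to the first vertex placed into $S_r$), and what the paper actually invokes later in the case $j=2k-2$ of Theorem~\ref{theorem heavys allg}, is: if $S_r\neq\emptyset$ then $\Delta(G_r)=\Delta(G)-r$. I would prove and record that statement in place of the equality clause, and keep only the inequality $\Delta(G_{r+1})\leq\Delta(G)-(r+1)$, for which your case split on whether $S_r$ is empty is already complete.
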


\begin{proof}
 The algorithm starts with $r=0$.
 Since $G=G_0-S_0$, $\Delta(G_0-S_0) \geq \Delta(G)-0$ is trivially true. 
 Thus, the algorithm enters the outer while-loop.
 As long as 
 \begin{equation}\label{outer while condition}
  \Delta(G_r-S_r) = \Delta(G)-r 
 \end{equation}
 holds for any fixed $r \in \{0,\ldots,2k-1\}$, 
 the outer while-loop is repeated. 
 After every enlargement of $S_r$, 
 the procedure tests if~(\ref{outer while condition}) is still true 
 and enters the inner while-loop if and only if it is not. 
 The inner while-loop is repeated 
 until $r$ is large enough such that~(\ref{outer while condition}) 
 is true again or until $r$ reaches the value of $2k-1$, 
 forcing the algorithm to stop. 
 In particular, every time the inner while-loop is entered or iterated, 
 either the algorithm stops or $r$ is increased by one. 
 So, in order to show that the procedure terminates, 
 it suffices to prove that for every $r \in \{0,\ldots,2k-1\}$, 
 the inner while-loop is entered or iterated. 
 To attain this result, 
 let $S_r^i$ be the set $S_r$ after the $i$th vertex was added, 
 for some $i \in \mathbb{N}_0$. 
 Since $|S_r^i|=|i|$, 
 $$G_r-S_r^{i+1}\subsetneq G_r-S_r^{i}$$
 holds.  
 That is, every time the outer while-loop is repeated for some fixed $r$, 
 the considered graph is a proper subgraph of the graph considered in the previous iteration.
 Therefore, at some point of the iteration, 
 \begin{equation}\label{inner while condition}
  \Delta(G_r-S_r) < \Delta(G)-r 
 \end{equation}
	holds, contradicting (\ref{outer while condition}). 
 Hence, the inner while-loop is entered and the algorithm terminates. 

 Point 1 of Lemma~\ref{theorem lemma procedure analysis}: Observe that~(\ref{inner while condition}) implies 
 $\Delta(G_{r+1})=\Delta(G_r-S_r)\leq \Delta(G)-r-1$
 and that $S_r=\emptyset$ if and only if~(\ref{outer while condition}) 
 was false at any time of the procedure.  

Point 2 of Lemma~\ref{theorem lemma procedure analysis}: 
 Let $r \in \{0,\ldots,2k-1\}$.
 Since $G_{r+1}=G_{r}-S_{r}$ 
 and $G_0=G$, 
 $$G_{r+1} = G_r-S_r = G_{r-1}-S_{r-1}-S_r = \ldots = G_0-\bigcup_{i=0}^{r}S_i=G-\bigcup_{i=0}^{r}S_i.$$
 
Point 3 and 4 of Lemma~\ref{theorem lemma procedure analysis}: Let $S_r \not = \emptyset$. 
 Note that a vertex $v$ is put into $S_r$ after the conditions 
 $\Delta(G_r-S_r)=\Delta(G)-r$ and 
 $deg_{G_r-S_r}(v)=\Delta(G)-r$
 are verified. 
 In particular, 
 $v$ has $\Delta(G)-r$ neighbors in $$G-\bigcup_{i=0}^{r-1}S_i,$$
 and Point~\ref{lemma neighbors in remainder} follows if $v$ has no neighbors in $S_r$.  
 After $v$ is put into $S_r$, 
 the former neighbors of $v$ in the updated graph $G_r-S_r$ 
 have $(G_r-S_r)$-degree at most $\Delta(G_r-S_r)-1$.
 Hence, they will not be put into $S_r$. 
Hence, $S_r$ is stable and 
thus Point~\ref{lemma Sr empty set} and Point~\ref{lemma neighbors in remainder} follow. 
\end{proof}

We apply \textsc{heavy stable sets} to a minimal counterexample to Reed's Conjecture. 
Roughly speaking, 
the union of the constructed stable sets 
induces a color-critical subgraph in the counterexample.  
Every vertex in this union has relatively high degree. 

\begin{theorem} \label{theorem heavys allg}
Let $G$ be a minimal counterexample to Reed's Conjecture. 
Then $G$ contains a $2$-color-critical subgraph consisting 
of vertices with $G$-degree $\Delta(G)$. 
Moreover, for every $k \in \mathbb{N}$, $2 \leq k \leq \chi(G)-1$, 
$G$ contains a $(k+1)$-color-critical subgraph $H$ such that 
$deg_G(v) \geq \max \{\delta(G), \Delta(G)-2k+3\}$ holds for 
all $v \in V(H)$.
\end{theorem}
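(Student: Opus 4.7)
The plan is to peel off stable sets of high-degree vertices via \textsc{heavy stable sets} and then extract the required critical subgraphs from the residue by chromatic accounting, invoking Proposition~\ref{theorem proposition k vertex critical subgraphs} to promote chromatic lower bounds into explicit critical subgraphs. Before running the algorithm I would observe that $G$ is $\chi(G)$-color-critical: for any $v\in V(G)$ the proper induced subgraph $G-v$ obeys the conjecture, so $\chi(G-v)\le\lceil(\Delta(G)+\omega(G)+1)/2\rceil<\chi(G)$.

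For the $2$-color-critical statement it suffices to show that $V_\Delta:=\{v\in V(G):\deg_G(v)=\Delta(G)\}$ is not stable, since by Proposition~\ref{theorem proposition Toft Koenig1950} the unique $2$-color-critical graph is $K_2$. Assuming for contradiction that $V_\Delta$ is stable, color-criticality forces $\chi(G-V_\Delta)=\chi(G)-1$ (the inclusion $G-V_\Delta\subseteq G-v$ bounds from above, and coloring $V_\Delta$ as one extra class bounds from below). Reed's Conjecture applied to the proper induced subgraph $G-V_\Delta$, together with $\Delta(G-V_\Delta)\le \Delta(G)-1$, yields a contradiction when $\omega(G-V_\Delta)\le \omega(G)-1$, irrespective of the parity of $\Delta(G)+\omega(G)$. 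The remaining subcase, in which a maximum clique $K$ of $G$ avoids $V_\Delta$ and $\Delta(G)+\omega(G)$ is odd, requires extra work: every vertex of $K$ has $G$-degree at most $\Delta(G)-1$ and at most $\Delta(G)-\omega(G)$ external neighbors, and I would exploit this rigidity, or a Kempe-chain recoloring of a $(\chi(G)-1)$-coloring of $G-V_\Delta$, to conclude.

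For $2\le k\le \chi(G)-1$ I would run \textsc{heavy stable sets}$(G,k)$, obtaining stable sets $S_0,\ldots,S_{2k-1}$ with residual graph $G_{2k}$ satisfying $\Delta(G_{2k})\le \Delta(G)-2k$ and $\deg_G(v)\ge \Delta(G)-r$ for $v\in S_r$ (Lemma~\ref{theorem lemma procedure analysis}). Reed's Conjecture applied to the proper induced subgraph $G_{2k}$ gives
\[
\chi(G_{2k})\le\left\lceil\tfrac{\Delta(G)+\omega(G)+1}{2}\right\rceil-k,
\]
and combining this with the subadditivity $\chi(G)\le \chi(G[\bigcup S_i])+\chi(G_{2k})$ and $\chi(G)\ge\lceil(\Delta(G)+\omega(G)+1)/2\rceil+1$ yields $\chi(G[\bigcup S_i])\ge k+1$. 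Proposition~\ref{theorem proposition k vertex critical subgraphs} then extracts a $(k+1)$-color-critical subgraph $H\subseteq G[\bigcup S_i]$, in which every vertex has $G$-degree at least $\Delta(G)-(2k-1)$ and trivially at least $\delta(G)$.

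To sharpen this to $\max\{\delta(G),\Delta(G)-2k+3\}$ I would split on whether $\delta(G)\ge\Delta(G)-2k+3$ (in which case the stated bound reduces to the trivial one) or $\delta(G)<\Delta(G)-2k+3$; in the latter regime Proposition~\ref{theorem proposition min degree vertex critical} forces $\chi(G)\le\delta(G)+1<\Delta(G)-2k+3$ and hence $\omega(G)\le\Delta(G)-4k+3$, and I would use this to relocate $H$ inside $G[\bigcup_{i=0}^{2k-3}S_i]$, where every vertex has $G$-degree at least $\Delta(G)-(2k-3)$. The naive version of the chromatic bookkeeping above only gives $\chi(G[\bigcup_{i=0}^{2k-3}S_i])\ge k$, so the missing unit must be recovered either by showing that $S_{2k-2}\cup S_{2k-1}=\emptyset$ in this low-$\delta$ regime, or by tightening the Reed bound on $G_{2k-2}$ through an improved bound on $\omega(G_{2k-2})$ coming from the algorithm's greedy removal of high-degree vertices. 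This chromatic tightening, together with the odd-parity subcase of the heavy-edge argument, is where I expect the main difficulty to lie.
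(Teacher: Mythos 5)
Your skeleton for the case $k\ge 2$ (peel $2k$ stable layers, use $\chi(G)\le\chi(G[\bigcup S_i])+\chi(G_{2k})$ together with minimality of $G$ to force $\chi(G[\bigcup S_i])\ge k+1$, then invoke Proposition~\ref{theorem proposition k vertex critical subgraphs}) is exactly the paper's, but you stop two units short of the claimed degree bound and your plan for closing that gap would not work. You correctly note that membership in some $S_j$ only gives $\deg_G(v)\ge\Delta(G)-(2k-1)$; the theorem demands $\Delta(G)-2k+3$, and since $\deg_G(v)\ge\max\{\delta(G),\Delta(G)-2k+3\}$ requires the second term unconditionally, your dichotomy on whether $\delta(G)\ge\Delta(G)-2k+3$ does not make the hard case go away. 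The two ideas you are missing are: (i) for $v\in S_{2k-1}$, Proposition~\ref{theorem proposition min degree vertex critical} gives $v$ at least $k$ neighbors inside $H\subseteq\bigcup S_i$, and these are disjoint from its $\Delta(G)-2k+1$ neighbors in $G_{2k}$ guaranteed by Lemma~\ref{theorem lemma procedure analysis}, Point~\ref{lemma neighbors in remainder}, so $\deg_G(v)\ge\Delta(G)-k+1\ge\Delta(G)-2k+3$; and (ii) for $v\in S_{2k-2}$, either $v$ has a neighbor in $\bigcup_{i=0}^{2k-2}S_i$ (done), or all of its $\ge k$ $H$-neighbors lie in $S_{2k-1}$, in which case such a neighbor $x$ has the same $G_{2k-2}$-degree as $v$ but strictly larger $G$-degree, contradicting the tie-break rule of \textsc{heavy stable sets}. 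This tie-break rule is in the algorithm precisely to make step (ii) work; your proposal never uses it, and your suggested alternatives (showing $S_{2k-2}\cup S_{2k-1}=\emptyset$, or improving $\omega(G_{2k-2})$) are speculation, not argument.

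Your treatment of the $2$-color-critical part also has an acknowledged hole. Deleting $V_\Delta$ once drops $\Delta$ by only $1$ while $\chi$ drops by $1$, so the parity case ($\Delta(G)+\omega(G)$ odd with $\omega(G-V_\Delta)=\omega(G)$) genuinely blocks the contradiction, and neither the ``rigidity'' of a maximum clique avoiding $V_\Delta$ nor an unspecified Kempe-chain argument is a proof. The paper avoids the parity issue entirely by running the same algorithm with $k=1$: if $S_0\cup S_1$ were stable, then $G_2$ would satisfy $\chi(G_2)\ge\chi(G)-1$ while $\Delta(G_2)\le\Delta(G)-2$, i.e., $\Delta$ drops by \emph{two} per color class removed, so the ceiling decreases by exactly one regardless of parity and $G_2$ is a smaller counterexample. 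The resulting $K_2$ then has both endpoints of $G$-degree $\Delta(G)$ by the same neighbor-counting as in (i). I would recommend abandoning the direct $V_\Delta$ argument and folding the heavy-edge case into the algorithmic framework as the $k=1$ instance.
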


\begin{proof}
Let $k \in \{1,\ldots,\chi(G)-1\}$ and let $G$ be a minimal counterexample. 
Use $G$ and $k$ as input for \textsc{heavy stable sets}. 
Let $S$ be the graph induced by $$\bigcup_{i=0}^{2k-1}S_i$$ 
and assume that $S$ is $k$-colorable.
By Lemma~\ref{theorem lemma procedure analysis}, Point~\ref{lemma Gr+1}, 
$$G_{2k}=G-\bigcup_{i=0}^{2k-1}S_i=G-V(S)$$ holds, thus  
$\chi(G) \leq \chi(G_{2k}) + \chi(S)$. 
That is, $\chi(G_{2k}) \geq \chi(G) - k.$
By Lemma~\ref{theorem lemma procedure analysis}, Point~\ref{lemma Delta in r}, 
$\Delta(G) - 2k \geq \Delta(G_{2k}).$
Since $G_{2k}$ is an induced subgraph of $G$, 
$\omega(G) \geq \omega(G_{2k})$ holds. 
Given that $G$ is a counterexample, 
\begin{eqnarray*}
\chi(G_{2k}) 	&\geq& \chi(G)-k \\
		&>& \left \lceil \frac{\Delta(G)+\omega(G)+1}{2} \right \rceil - k\\
		&=& \left \lceil \frac{\Delta(G)-2k+\omega(G)+1}{2} \right \rceil \\
		&\geq& \left \lceil \frac{\Delta(G_{2k})+\omega(G_{2k})+1}{2} \right \rceil\\
\end{eqnarray*}
holds. 
Note that $k > 0$ and $S_0 \not = \emptyset$. 
Hence, $G_{2k}$ is both a proper induced subgraph of $G$ and 
a counterexample, contradicting the choice of $G$.
Thus, $S$ is at least $(k+1)$-chromatic. 
Due to Proposition~\ref{theorem proposition k vertex critical subgraphs}, 
$S$ contains an induced $(k+1)$-color-critical subgraph.

Let $H$ be such a $(k+1)$-color-critical subgraph and let $v \in V(H)$. 
If $k=1$, 
then $v \in S_0$ or $v\in S_1$. 
If $v\in S_0$, then $deg_G(v)=\Delta(G)$.
If $v\in S_1$, then $v$ has at least one neighbor in 
$H \subseteq (S_0 \cup S_1)$, 
by Proposition~\ref{theorem proposition min degree vertex critical}.  
Due to Lemma~\ref{theorem lemma procedure analysis}, Point~\ref{lemma neighbors in remainder}, 
$v$ has $\Delta(G)-1$ neighbors in $G-(S_0 \cup S_1)$, 
hence $deg_G(v)=\Delta(G)$. 
If $k\geq 2$, 
it suffices to prove that 
$deg_G(v) \geq \Delta(G)-2k+3$. 
To obtain this result, let $v \in V(H)$. 
Then $v \in S_j$ for some $j \in \{0,\ldots,2k-1\}$. 
Note that if $j \leq 2k-3$, then, 
by Lemma~\ref{theorem lemma procedure analysis}, Point~\ref{lemma neighbors in remainder}, 
$v$ has $\Delta(G)-j$ neighbors in $G-\bigcup_{i=0}^jS_i$. 
In this case, 
the claim follows, 
since $$deg_G(v) \geq \Delta(G)-j \geq \Delta(G)-2k+3.$$
Further note that if $j=2k-1$, then, 
again by Lemma~\ref{theorem lemma procedure analysis}, Point~\ref{lemma neighbors in remainder}, 
$v$ has $\Delta(G)-2k+1$ neighbors in $G_{2k}=G-S$. 
Moreover, $H \subseteq S$, 
and by 
\mbox{Proposition~\ref{theorem proposition min degree vertex critical}}, 
$v$ has at least $k$ neighbors in $H$.
It follows that
\begin{equation}\label{equation heavy S2k-1}
deg_G(v)\geq \Delta(G)-2k+1 + k = \Delta(G)-k+1 \geq \Delta(G)-2k+3.
\end{equation}
Hence in order to complete the proof, 
we have to consider the case $j=2k-2$.
Recall that by Lemma~\ref{theorem lemma procedure analysis}, Point~\ref{lemma Gr+1}, 
$$G_{2k-1}=G-\bigcup_{i=0}^{2k-2}S_i$$
holds. 
By Lemma~\ref{theorem lemma procedure analysis}, Point~\ref{lemma neighbors in remainder}, 
$v$ has $\Delta(G)-2k+2$ neighbors in $G_{2k-1}$.
If $v$ has at least one neighbor in $$\bigcup_{i=0}^{2k-2}S_i,$$ 
the proof is completed. 
Assume the contrary. 
Since $H$ is a $(k+1)$-color-critical graph, 
$v$ has at least $k$ neighbors in $H$.
By our assumption, 
every neighbor of $v$ in $H$ was added to the vertex set that induces $H$ \textit{after} $v$ was added. 
Hence, 
\begin{equation}\label{equation heavy degree v}
deg_G(v)=\Delta(G)-2k+2   
\end{equation}
and $N_H(v) \subseteq S_{2k+1}.$  
Let $x \in N_H(v)$. 
According to~(\ref{equation heavy S2k-1}), 
\begin{equation}\label{equation heavy degree x}
deg_G(x)\geq \Delta(G)-2k+3 
\end{equation}
holds. Recall that by Lemma~\ref{theorem lemma procedure analysis}, Point~\ref{lemma Gr+1}, 
$G_{2k-2}=G-\bigcup_{i=0}^{2k-3}S_i$ 
holds. 
Since $v \in S_{2k-2}$, 
by Lemma~\ref{theorem lemma procedure analysis},~Point~\ref{lemma Delta in r}, 
we have $\Delta(G_{2k-2})=\Delta(G)-2k+2.$
Therefore, the $G_{2k-2}$-degree of $x$ is not only at most $\Delta(G)-2k+2$ 
but is exactly of that value. 
Otherwise, since $v$ and $x$ are adjacent, 
the $G_{2k-1}$-degree of $x$ is at most $\Delta(G)-2k$, 
contradicting the fact that $x \in S_{2k-1}$. 
Thus, 
the $G_{2k-1}$-degree of $x$ is identical with the 
$G_{2k-1}$-degree of $v$. 
Choosing $v$ and not $x$ to be put into $S_{2k-2}$ contradicts the tie-break rule of the algorithm,   
since the $G$-degree of $x$ 
is higher than the $G$-degree than $v$, 
which can be seen by~(\ref{equation heavy degree x}) 
and~(\ref{equation heavy degree v}). 
This completes the proof. 
\end{proof}

Proposition~\ref{theorem proposition Toft Koenig1950} 
provides the sets of 2-~and 3-critical graphs. 
Hence, 
we adapt the result of Theorem~\ref{theorem heavys allg}.  

\begin{corollary}\label{theorem corollary heavys}
Let $G$ be a minimal counterexample to Reed's Conjecture. 
Then $G$ contains a heavy edge and a heavy odd cycle.
\end{corollary}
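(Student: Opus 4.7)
The plan is to read off both conclusions from Theorem~\ref{theorem heavys allg} by combining it with the classification of 2-color-critical and 3-color-critical graphs from Proposition~\ref{theorem proposition Toft Koenig1950}. For the heavy edge, I would invoke the first, unindexed sentence of Theorem~\ref{theorem heavys allg}: $G$ contains a 2-color-critical subgraph whose vertices all have $G$-degree $\Delta(G)$. Since Proposition~\ref{theorem proposition Toft Koenig1950} forces this subgraph to be $K_2$, the two vertices together with the edge between them form an edge whose endvertices both have $G$-degree $\Delta(G)$, i.e.\ a heavy edge as defined in the preparation section.

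For the heavy odd cycle, I would apply Theorem~\ref{theorem heavys allg} with $k=2$. This requires $\chi(G)-1\geq 2$, which I would verify by a short calculation: since $G$ is a counterexample, $\chi(G)>\lceil(\Delta(G)+\omega(G)+1)/2\rceil$; by Proposition~\ref{theorem proposition CE has Delta at least 5}, $\Delta(G)\geq 5$; and $\omega(G)\geq 2$ (otherwise $G$ is edgeless with $\chi(G)=1$, which is no counterexample). Hence $\chi(G)\geq\lceil(5+2+1)/2\rceil+1=5\geq 3$, so $k=2$ is admissible.

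Theorem~\ref{theorem heavys allg} then yields a 3-color-critical induced subgraph $H$ of $G$ such that every $v\in V(H)$ satisfies $\deg_G(v)\geq\max\{\delta(G),\Delta(G)-2\cdot 2+3\}=\max\{\delta(G),\Delta(G)-1\}\geq\Delta(G)-1$. By Proposition~\ref{theorem proposition Toft Koenig1950}, $H$ must be isomorphic to some odd cycle $C_{2n+1}$. Hence $H$ is an induced odd cycle in $G$ each of whose vertices has $G$-degree at least $\Delta(G)-1$, matching the definition of a heavy odd cycle.

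I do not expect any real obstacle: the entire argument is essentially a dictionary lookup. The only minor care required is the lower bound $\chi(G)\geq 3$ needed to legitimise the choice $k=2$, and this follows in one line from Proposition~\ref{theorem proposition CE has Delta at least 5} together with the counterexample inequality, as above.
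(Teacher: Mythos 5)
Your proposal is correct and follows essentially the same route as the paper: invoke Theorem~\ref{theorem heavys allg} (the $k=1$ conclusion for the heavy edge and the $k=2$ case for the heavy odd cycle) and then identify the resulting 2- and 3-color-critical subgraphs via Proposition~\ref{theorem proposition Toft Koenig1950}. Your extra check that $\chi(G)\geq 3$ legitimises the choice $k=2$ is a small point the paper leaves implicit, but it does not change the argument.
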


\begin{proof}
Let $G$ be a vertex-minimal counterexample to Conjecture~1. 
By Theorem~\ref{theorem heavys allg}, 
$G$ contains a $2$- respectively $3$-color-critical subgraph $H$ such that for all $v \in V(H)$, 
$deg_G(v)=\Delta(G)$ respectively $deg_G(v)\geq\Delta(G)-1$. 
By Proposition~\ref{theorem proposition Toft Koenig1950}, $H$ is an edge respectively $H$ is an odd cycle.
\end{proof}

Corollary~\ref{theorem corollary heavys} allows us to find two classes of graphs for which Reed's Conjecture holds. 

\begin{theorem}
\label{theorem fulfill Delta0}
Let $\Delta_0 \in \mathbb{N}$. 
If Reed's Conjecture holds for graphs with maximum degree at most $\Delta_0$, 
then the conjecture holds 
\begin{enumerate}
 \item for graphs in which the vertices of degree at least $\Delta_0 + 1$ form a stable set.\label{lemma stable set delta 0} 
 \item for graphs in which every induced odd cycle contains a vertex of degree at most $\Delta_0-1$. \label{lemma cycle delta 0}
\end{enumerate}
In particular, if for a graph $G$ 
the vertices of degree at least 5 form a stable set or 
if all induced cycles in $G$ that are of odd length contain a vertex of degree at most 3,  
then Reed's Conjecture holds for $G$.
\end{theorem}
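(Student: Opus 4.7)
The plan is to handle both statements by the same minimal-counterexample schema, invoking Corollary~\ref{theorem corollary heavys} at the end. I would assume for contradiction that a graph in the given class violates Reed's Conjecture, and pick $G$ to be such a violator of minimum order within the class. The first task is to show that $G$ is actually a minimal counterexample to Reed's Conjecture; for this I would verify that both classes are hereditary. For (1), any vertex of an induced subgraph $H$ whose $H$-degree is at least $\Delta_0+1$ has at least the same degree in $G$, so the high-degree vertex set of $H$ is contained in that of $G$, hence is still stable. For (2), every induced odd cycle of $H$ is also an induced odd cycle of $G$, and the required low-degree vertex has only smaller degree in $H$. By minimality of $G$ within the class, heredity then forces every proper induced subgraph of $G$ to obey Reed's Conjecture, making $G$ a minimal counterexample.

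Next I would note that the hypothesis forces $\Delta(G) \geq \Delta_0+1$, since otherwise Reed's Conjecture would hold for $G$ by assumption, and then apply Corollary~\ref{theorem corollary heavys}. For statement (1), $G$ contains a heavy edge $uv$ whose endpoints both have $G$-degree $\Delta(G) \geq \Delta_0+1$ and are adjacent; this directly contradicts the assumption that vertices of degree at least $\Delta_0+1$ form a stable set. For statement (2), $G$ contains a heavy odd cycle; tracing back through Theorem~\ref{theorem heavys allg} and Proposition~\ref{theorem proposition k vertex critical subgraphs} shows that this cycle is produced as an \emph{induced} $3$-color-critical subgraph of $G$, so it is induced in $G$. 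Every vertex of it has $G$-degree at least $\Delta(G)-1 \geq \Delta_0$, contradicting the hypothesis that this induced odd cycle contains a vertex of degree at most $\Delta_0-1$.

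The \emph{in particular} statement will follow by setting $\Delta_0 = 4$, since Proposition~\ref{theorem proposition CE has Delta at least 5} guarantees that Reed's Conjecture holds for all graphs of maximum degree at most $4$. The only delicate point in the argument is the heredity check, together with the observation that the heavy odd cycle supplied by the corollary is in fact induced; once these are settled, the two contradictions with the class-defining properties are immediate.
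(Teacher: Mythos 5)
Your proof is correct and follows essentially the same route as the paper: both arguments reduce to a minimal counterexample with $\Delta \geq \Delta_0+1$ and then invoke Corollary~\ref{theorem corollary heavys} to produce a heavy edge (respectively an induced heavy odd cycle) that violates the class-defining property. The only cosmetic difference is that you obtain the minimal counterexample via heredity of the two classes, whereas the paper passes to a minimal counterexample inside an arbitrary counterexample and lifts the degree bounds back up; your explicit check that the heavy odd cycle is \emph{induced} is a welcome detail the paper leaves implicit.
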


\begin{proof}
Let the conjecture hold for graphs with maximum degree at most $\Delta_0$. 
Then every counterexample contains a minimal counterexample that has maximum degree at least $\Delta_0 + 1$ as induced subgraph.  
According to \mbox{Corollary~\ref{theorem corollary heavys}}, 
every minimal counterexample contains a heavy edge (respectively a heavy cycle). 
Hence, every counterexample contains  
an edge where the endvertices are of degree at least $\Delta_0+1$ 
(respectively an odd cycle where all vertices in the cycle are of degree at least $\Delta_0$). 
Thus, 
Claim~\ref{lemma stable set delta 0} (respectively Claim~\ref{lemma cycle delta 0}) follows. 

In particular, 
by Proposition~\ref{theorem proposition CE has Delta at least 5}, 
Reed's Conjecture holds for all graphs with $\Delta=4$. 
\end{proof}

If all induced odd cycles contain a vertex of degree at most 3, 
then there is another, straightforward way to validate Reed's Conjecture. 
A counterexample $G$ is at least $5$-colorable, 
since the conjecture holds for graphs with $\chi \leq 4$.
Thus, 
$G$ contains a $5$-critical graph 
whose minimum degree is at least $4$. 
But every subgraph of $G$ which is induced by vertices of degree 
at least $4$ is bipartite, a contradiction. 
However, color-critical graphs are not excluded by means of the minimum degree 
if the difference between the maximum degree and the clique number is large enough.

\begin{observation}
Let $\Delta_0 \in \mathbb{N}$. 
Graphs in which the vertices of degree at least $\Delta_0+1$ form a stable set 
and graphs in which every induced odd cycle contains a vertex of degree at 
most $\Delta_0-1$ can be recognized in polynomial time. 
\end{observation}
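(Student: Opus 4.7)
The plan is to handle the two graph classes separately, reducing the recognition of each to well-known polynomial-time subroutines.

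For the first class, I would compute the $G$-degree of every vertex in time $O(|V(G)|+|E(G)|)$, collect the set $S$ of vertices whose degree is at least $\Delta_0+1$, and then simply test whether $S$ is stable by scanning the adjacency list (or matrix) of every vertex in $S$ and checking that no neighbor lies in $S$. This test is clearly polynomial (at worst $O(|V(G)|^2)$), so the class is recognizable in polynomial time. No conceptual obstacle arises here.

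For the second class, the key observation I would state and use is the following equivalence: a graph $G$ belongs to the class if and only if the subgraph $H$ of $G$ induced by the vertices of $G$-degree at least $\Delta_0$ is bipartite. The forward direction is the harder one: if $H$ contains an odd cycle, then by repeatedly replacing a non-induced odd cycle by a shorter odd subcycle through one of its chords, one obtains an induced odd cycle $C$ in $H$, and $C$ remains induced in $G$ because $H$ is an induced subgraph; but all vertices of $C$ have $G$-degree $\geq \Delta_0$, contradicting the defining property of the class. The reverse direction is immediate: any induced odd cycle of $G$ whose vertices all have degree $\geq \Delta_0$ would sit inside $H$, forcing $H$ to contain an odd cycle.

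Given this equivalence, recognition reduces to computing the degree sequence, forming the induced subgraph $H$ on the high-degree vertices, and running standard bipartiteness testing via BFS (or 2-colouring via DFS), all in $O(|V(G)|+|E(G)|)$ time.

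The only step that requires a short argument rather than a routine calculation is the chord-reduction claim used above to pass from an odd cycle to an induced odd cycle inside $H$; I would present this as the main substantive step of the proof, with the degree computation and bipartiteness test cited as standard.
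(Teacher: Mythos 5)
Your proposal is correct and takes essentially the same approach as the paper: for the first class it checks stability of the set of vertices of degree at least $\Delta_0+1$, and for the second it tests bipartiteness of the subgraph induced by the high-degree vertices, with your chord-reduction argument simply making explicit the equivalence (odd cycle $\Leftrightarrow$ induced odd cycle $\Leftrightarrow$ non-bipartite) that the paper compresses into the word \emph{essentially}. As a minor point, your degree threshold $\Delta_0$ for the second class is the right one, whereas the paper's accompanying text writes $\Delta_0+1$ there, which appears to be a slip.
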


The first class mentioned in the observation 
is recognized by simply checking if an edge is formed 
in the set of vertices with degree at least $\Delta_0+1$. 
The recognition of the second class 
requires, essentially, testing if the graph induced by 
vertices of degree at least $\Delta_0+1$ is bipartite. 

\begin{theorem}\label{theorem coloring}
Let $G$ be a graph in which the 
the vertices of degree at least $5$ form a stable set or 
in which all induced odd cycles contain a vertex of degree at most 3.   
Then a coloring of $G$ that obeys Reed's Conjecture can be found in polynomial time. 
\end{theorem}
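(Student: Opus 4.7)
The plan is to turn the non-constructive proof of Theorem~\ref{theorem fulfill Delta0} into a polynomial-time procedure that produces a coloring with $c:=\lceil(\Delta(G)+\omega(G)+1)/2\rceil$ colors. Its backbone is Proposition~\ref{theorem proposition CE has Delta at least 5}, which I would take in its effective form: on any graph with $\Delta\leq 4$ a coloring attaining the Reed bound can be computed in polynomial time, since Brooks' algorithm runs in polynomial time and the finitely many residual small configurations specific to $\Delta\leq 4$ can be handled by a constant-size case analysis. I will use this as a black box.

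First I would determine, in linear time, which hypothesis applies: compute all vertex degrees, form the sets $S:=\{v\in V(G):\deg_G(v)\geq 5\}$ and $T:=\{v\in V(G):\deg_G(v)\geq 4\}$, and test either that $S$ is stable or that $G[T]$ is bipartite (the latter being equivalent to the hypothesis of Case~2, as spelled out after Corollary~\ref{theorem corollary heavys}). In Case~1 I would run the black box on $G-S$, which has $\Delta\leq 4$, and extend its coloring vertex by vertex to $S$; since $S$ is a stable set and the neighborhood of each $v\in S$ is already colored in $G-S$, the extension amounts to picking an unused color in $\{1,\dots,c\}$ for each $v\in S$ independently. Case~2 is handled in parallel by running the black box on $G-T$ (whose maximum degree is at most~$3$) and extending through the bipartition of $G[T]$.

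The main obstacle is proving that the greedy extension always finds a free color within the budget $c$. A straightforward degree count shows this is automatic whenever the number of colors $c'$ used on the residual graph satisfies $c'<c$, which holds for instance when $\Delta(G)\geq 6$ or when $\Delta(G)+\omega(G)$ is even. The tight cases, where $c'=c$ is attainable and hence some vertex of $S$ (respectively $T$) could in principle see all $c$ colors on its neighborhood, are essentially $\Delta(G)=5$ with $\omega(G)\in\{2,4\}$ in Case~1 and analogous ones in Case~2. I would settle them by Kempe-chain recolorings inside the residual graph, whose bounded maximum degree keeps each swap polynomial; failure of such a recoloring would, by the argument of Theorem~\ref{theorem fulfill Delta0} combined with Corollary~\ref{theorem corollary heavys}, produce a heavy edge or a heavy odd cycle inside $G$ and contradict the hypothesis, so the recolorings must succeed. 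Executing all of these steps yields a valid coloring of $G$ in time polynomial in $|V(G)|$.
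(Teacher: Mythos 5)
Your overall strategy---color the low-degree part with a black box and then extend to the high-degree vertices---is genuinely different from the paper's, and it breaks down exactly in the cases that matter. (The black box itself is fine: for $\Delta\leq 4$ the algorithmic form of Brooks' theorem already attains the Reed bound, since $\left\lceil(\Delta+\omega+1)/2\right\rceil\geq\Delta$ there unless a component is complete.) The first gap is the extension in Case~2: if you color $G-T$ with $c'$ colors and then spend two fresh colors on the bipartition of $G[T]$, you use $c'+2$ colors, and already for $\Delta(G)=5$, $\omega(G)=2$ this gives $3+2=5$ against a budget of $\left\lceil(5+2+1)/2\right\rceil=4$; if instead you reuse colors, a vertex of $T$ has up to $\Delta(G)$ already-colored neighbors and the extension is no longer justified. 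The second and more serious gap is the tight case of Case~1 (e.g.\ $\Delta(G)=5$, $\omega(G)=2$, budget $4$): a degree-$5$ vertex of $S$ can see all four colors, and your proposed rescue---Kempe-chain recolorings whose ``failure would produce a heavy edge or a heavy odd cycle''---is not an argument. Corollary~\ref{theorem corollary heavys} concerns minimal counterexamples processed globally by \textsc{heavy stable sets}; the failure of one local recoloring does not yield such a structure, and arguing ``a $c$-coloring exists because $G$ is not a counterexample'' gives existence, not a polynomial-time procedure. Recoloring around a degree-$5$ vertex with $4$ colors via Kempe chains is, moreover, precisely the configuration in which Kempe's classical argument is known to fail, so this step cannot be waved through.

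The paper avoids all of this with a direct decomposition that needs neither the black box nor any recoloring. For the stable-set case it takes an inclusionwise \emph{maximal} stable set containing all vertices of degree at least $5$, gives it one color, and observes that every remaining vertex has a neighbor in that set and $G$-degree at most $4$, so the remainder has maximum degree $3$ and is $3$-colored by Lov\'{a}sz's algorithmic Brooks theorem: four colors in total. For the odd-cycle case it two-colors an inclusionwise maximal bipartite subgraph containing all vertices of degree at least $4$ and shows the remainder has maximum degree $1$, again four colors. Maximality of the chosen set is the idea your proof is missing; with it, the ``tight cases'' never arise.
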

\begin{proof}
Let $G$ be a graph 
in which all induced odd cycles contain a vertex of degree at most 3. 
Let $B'$ be the graph induced by all vertices of degree at least $4$. 
Observe that $B'$ is bipartite. 
Let $B$ be an inclusionwise maximal bipartite subgraph of $G$ that contains $B'$. 
We use breadth-first search in order to color $B'$ with the colors $1$ and~$2$. 
Let $R$ be the graph induced by $V \setminus V(B)$. 
Assume $R$ contains a vertex of degree at least $2$, say $v$. 
Then the $R$-degree of $v$ is at least $4$, since otherwise, $v$ can be added to $B$. 
This contradicts the choice of $B$. 
Hence, the maximum degree of $R$ is $1$. 
Thus, $R$ is bipartite. 
We therefore again use breadth-first search 
to color $R$ with colors $3$ and $4$.  
This way, we provide a $4$-coloring. 

Let $G$ be a graph in which all vertices of degree at least $5$ form a stable set. 
Let $S$ be an inclusionwise maximal stable set which contains all vertices of degree at least $5$. 
Observe that $R$, 
which is the graph induced by $V(G) \setminus S$, 
has $\Delta(R) \leq 3$: 
Every hypothetical vertex with $R$-degree $4$ has $G$-degree at least $5$ and is therefore in $S$. 
We color $S$ with one color. 
A $3$-coloring of $R$ is provided by Lov\'{a}sz~\cite{Lovasz_1975}.     
\end{proof}

\bibliographystyle{plain}
\bibliography{./D_mybib_heavy}

\end{document}